\pgfplotsset{compat=1.18}
\newcommand{\A}{\mathcal{A}}
\renewcommand{\L}{\mathcal{L}}
\definecolor{darkblue}{rgb}{0,0,0.65}
\tikzset{
->, 
>=stealth, 
node distance=3cm, 
every state/.style={thick, fill=gray!10}, 
initial text=$ $, 
every axis/.style={-},
}
\begin{document}
\title{ %
\texorpdfstring{$\omega$}{Omega}-regular Expression Synthesis from Transition-Based B\"{u}chi Automata
}

\author{Charles Pert \and
        Dalal Alrajeh\and
        Alessandra Russo}
\authorrunning{Pert el al.}

\institute{
Imperial College London, London, UK,\\
\email{\{charles.pert, dalal.alrajeh, a.russo\}@imperial.ac.uk}}%
\maketitle              
\begin{abstract}
A popular method for modelling reactive systems is to use $\omega$-regular languages. These languages can be represented as nondeterministic B\"{u}chi automata (NBAs) or $\omega$-regular expressions. Existing methods synthesise expressions from state-based NBAs. Synthesis from transition-based NBAs is traditionally done by transforming transition-based NBAs into state-based NBAs. This transformation, however, can increase the complexity of the synthesised expressions. This paper proposes a novel method for directly synthesising $\omega$-regular expressions from transition-based NBAs. We prove that the method is sound and complete. Our empirical results show that the $\omega$-regular expressions synthesised from transition-based NBAs are more compact than those synthesised from state-based NBAs. This is particularly the case for NBAs computed from obligation, reactivity, safety and recurrence-type LTL formulas, reporting in the latter case an average reduction of over $50\%$. We also show that our method successfully synthesises $\omega$-regular expressions from more LTL formulas when using a transition-based instead of a state-based NBA.
\keywords{\texorpdfstring{$\omega$}{ω}-regular expressions\and Regular expressions\and\texorpdfstring{B\"{u}chi}{Büchi} automata  \and Finite automata\and Expression synthesis\and State elimination}
\end{abstract}
\section{Introduction}
\label{sec:introduction}

Behaviours of reactive systems are characterised by infinite-length execution traces, which are mainly modelled by $\omega$-regular languages~\cite{ANGLUIN201657}. Nondeterministic B\"{u}chi automata (NBAs)~\cite{Büchi1990} and $\omega$-regular expressions are compact representations for these behaviours. NBAs can be \emph{state-based}, featuring accepting states, or \emph{transition-based}, featuring accepting transitions. An NBA accepts an infinite-length execution trace (i.e., a \emph{word}) when it traverses an accepting transition or state, respectively, infinitely often. Transition-based NBAs are at least as compact as state-based NBAs and their use often leads to more natural algorithms~\cite{duret.17.hdr}. This has been observed numerous times~\cite{Giannakopoulou2002FromST,KURSHAN198759,10.1007/3-540-58338-6_97,10.1007/978-3-662-44522-8_41} but not studied specifically. This property makes transition-based NBAs a preferable candidate for synthesising $\omega$-regular expressions.

Existing methods~\cite{Baier2008,timelines} synthesise $\omega$-regular expressions from state-based NBAs. Synthesising expressions from transition-based NBAs, thus far, would require transforming the NBAs into state-based NBAs (e.g., using~\cite{pin:hal-00112831}), doubling the number of states in the worst case. We hypothesise that synthesising $\omega$-regular expressions directly from transition-based NBAs can lead to more compact expressions. 

In this paper, we propose a method for the \emph{direct} synthesis of $\omega$-regular expressions from transition-based NBAs. For a given transition-based NBA, the method considers states with at least an outgoing accepting transition, as \emph{accepting} states. It decomposes the NBA into triplets of nondeterministic finite automata (NFAs), one for each $\langle$initial-state, accepting-state$\rangle$ pair. 
For each triplet, it synthesises regular expressions from its NFAs and recomposes them into an $\omega$-regular expression which describes the $\omega$-words accepted by the associated $\langle$initial-state, accepting-state$\rangle$. The union of these $\omega$-regular expressions give the full $\omega$-regular expression \emph{synthesised} from the original transition-based NBA. We prove the correctness of our proposed method and present an algorithm that implements it. We discuss the algorithm's time complexity and the descriptional complexity~\cite{surveyonautomata+regex} of the synthesised $\omega$-regular expressions.

To substantiate our hypothesis, we empirically evaluate the benefits of our method by answering the following question: 
\emph{does synthesising $\omega$-regular expressions directly from transition-based NBAs yield more compact expressions than those synthesised from state-based NBAs}". Our experiments compare expressions synthesised from transition-based and state-based NBAs specified by linear temporal logic (LTL)~\cite{origins_ltl} formulas. We use three metrics to measure compactness: \emph{reverse Polish notation}~\cite{surveyonautomata+regex}, \emph{timeline length} and \emph{star height}~\cite{timelines}. We use \emph{reverse Polish notation}, the number of nodes in the $\omega$-regular expression's syntax tree, as a proxy for the size of the $\omega$-regular expression, the total number of symbols (including operators) in the expression~\cite{surveyonautomata+regex}. We consider two datasets; the first includes LTL formulas collected from case studies and industrially used tools~\cite{timelines}. The second includes LTL formulas representing various patterns~\cite{dwyer_patterns} used to aid the development process of software systems. Starting from LTL formulas enables us to use Spot~\cite{duret.22.cav}, a tool optimised for computing compact state-based and transition-based NBAs.

The results of our experiments show that synthesising $\omega$-regular expressions directly from transition-based NBAs preserves their compactness. Without any simplification of the synthesised $\omega$-regular expressions, we found that the average reduction in \emph{reverse Polish notation} was $13.0\%$ and $11.3\%$. By dividing the dataset of LTL formulas into their types using the temporal hierarchy~\cite{mp_hierarchy}, we determined that recurrence, obligation and reactivity-type formulas tend to yield significantly smaller expressions from transition-based than state-based NBAs.

The paper is structured as follows. Section \ref{sec:preliminaries} introduces preliminaries needed for later sections. Section \ref{sec:theory}  presents our proposed method, using a detailed example of how it works for a transition-based NBA and describes an algorithm for it. We prove its correctness in Section \ref{sec:theoreticalguarantees}. In Section \ref{sec:experiment}, we introduce the two datasets used for our experiments and present our empirical results. 
In Section \ref{sec:discussion}, we discuss our findings from the experiments.
Section \ref{sec:conclusion} concludes the paper with an outline of future work.

\section{Preliminaries}
\label{sec:preliminaries}

We present the main concepts, terminologies and notations used throughout the paper. These are mainly adapted from~\cite{Baier2008},~\cite{introtolanguagesautomata} and~\cite{pin:hal-00112831}.

Let $\Sigma$ be a finite set of \emph{symbols}, called an \emph{alphabet}. A finite composition of symbols from $\Sigma$ is called a \emph{word}. Words are elements of $\Sigma^\ast$, the set of all finite words. A \emph{language} is a subset of $\Sigma^\ast$. In this paper, we consider only the following operations for composing languages. Let $A_1 \subseteq \Sigma^\ast$ and $A_2 \subseteq \Sigma^\ast$. We have: 
\begin{itemize}
    \item the \emph{union} operator, denoted as ($+$), such that $A_1+A_2 = \{u \in \Sigma^\ast |\, u \in A_1 \;\mbox{or}\; u \in A_2\}$;\footnote{Note that we use $+$ to denote the union operation, whereas some literature uses $\cup$.}
    \item the \emph{concatenation} operator, denoted as ($\cdot$), such that $A_1 \cdot A_2 = \{uv \in \Sigma^\ast |\, u \in A_1 \;\mbox{and}\; v\in A_2\}$; 
    \item the \emph{Kleene star} operator, denoted as the superscript ($^\ast$), such that $A_1^\ast$ is the language given by the union of finite concatenations of $A_1$. We concisely express this as: $A_1^\ast = \sum_{n \geq 0} A_1^n$, where $A_1^0 = \{\epsilon\}$ is the singleton set containing the empty word $\epsilon$, and $A_1^{i+1} = A_1^i \cdot A_1$ for $i \geq 0$;
    \item the \emph{wreath product} denoted as the superscript ($^\omega$), such that  $A_1^\omega$ represents an infinite concatenation of words from $A_1$.
\end{itemize} %
For example, $\{a \cdot (b)^\ast+ b \cdot a\}$ is the language made up of symbols from $\Sigma=\{a,b\}$, consisting of the words where $a$ is concatenated with a finite number of $b$ and $b$ is concatenated with $a$. We say two languages agree when a word is in one language if and only if it is in the other: $\{a \cdot (b)^\ast+ b \cdot a\}$ agrees with $\{a + a\cdot b\cdot (b)^\ast + b\cdot a\}$. We omit the concatenation operator and brackets in the absence of ambiguity.

Given a $\Sigma$, a \emph{regular} language is a formal language defined inductively from the empty language, $\emptyset$, and the singleton sets $\{\epsilon\}$ and $\{u\}$, for all $u \in \Sigma$, as base cases; and closed under union, concatenation and the Kleene star.

\begin{definition}[Nondeterministic Finite Automaton]
    A nondeterministic finite automaton (NFA) $\A$ is a tuple, $\A = (Q, \Sigma, \Delta, Q_0, F)$, where $Q$ is a finite set of states; $\Sigma$ is a finite alphabet; $\Delta\subseteq Q \times \Sigma \times Q$ is a transition relation; $Q_0 \subseteq Q$ is a set of initial states and $F \subseteq Q$ is a set of accepting states.
\end{definition}

Let $\A$ be an NFA and $w \in \Sigma^\ast$. A \emph{run} of $w$ in $\A$ is a sequence of states starting from an initial state in $Q_0$, followed by the states transitioned to as $w$ is read by $\A$. Given a run of $w$, the $k$-th consecutive pair of states $i,j$ in the run corresponds to the transition $(i,t,j)$ where $t$ is the $k$-th symbol in $w$. We say that the run \emph{takes} the transition $(i,t,j)$ from state $i$. A word $w$ is \emph{accepted} by $\A$ if at least one of its runs ends in an accepting state. Otherwise, we say that $w$ is \emph{rejected} by $\A$. The set of words accepted by $\A$, $L(\A)$, is called the language recognised by $\A$ and will always be regular~\cite{introtolanguagesautomata}.

An infinite composition of symbols from $\Sigma$ is called an \emph{$\omega$-word}. $\omega$-words are elements of $\Sigma^\omega$, the set of all infinite words. An \emph{$\omega$-language} is a subset of $\Sigma^\omega$, and \emph{$\omega$-regular languages} are those that take one of the following forms: $A^\omega$, where $A$ is a regular language and $\epsilon \notin A$; $A \cdot L_1$, where $A$ is regular and $L_1$ is $\omega$-regular and $L_1 + L_2$, where both $L_1, L_2$ are $\omega$-regular. See~\cite{Baier2008} for a more detailed introduction to $\omega$-regular languages. Throughout this paper, we shall drop the regular or $\omega$-regular modifiers when the language's type is clear.

While there are many types of $\omega$-automata, the most similar extension to an NFA is the nondeterministic B\"{u}chi automaton (NBA).

\begin{definition}[Transition-based NBA]
A transition-based NBA is a tuple, $B=(Q, \Sigma, \Delta, Q_0, Acc)$, where $Q$ is a finite set of states; $\Sigma$ is a finite alphabet; $\Delta\subseteq Q\times\Sigma\times Q$ is a transition relation; $Q_0 \subseteq Q$ is a set of initial states and $Acc \subseteq \Delta$ is the set of accepting transitions. Transitions in an NBA are called rejecting if they are not accepting. We denote with $\tilde{F} = \{q \in Q\ |\ (q,t,y) \in Acc\}$ the set of states in $Q$ that have at least one outgoing accepting transition.
\end{definition}

A run of an $\omega$-word $\sigma$ in an NBA $B$ is the same as the run of a word in an NFA, except the length of the run is infinite. $\sigma$ is \emph{accepted} by $B$ if and only if an accepting transition is traversed an infinite number of times in any of its runs. Otherwise, $\sigma$ is said to be \emph{rejected}. The language recognised by $B$ is the set of all of $B$'s accepted $\omega$-words. We denote it as $L_\omega(B)$. An $\omega$-language is $\omega$-regular~\cite{Baier2008}. Throughout the paper, we use $B$ to denote a transition-based NBA, and $\A$ to denote an NFA. A \emph{state-based} NBA is defined similarly to a transition-based NBA, with the exception that $Acc$ is replaced by $F\subseteq Q$, a set of accepting states. An $\omega$-word is accepted by a state-based NBA if and only if it visits an accepting state infinitely often for at least one of its runs.

 We adopt the standard convention~\cite{Baier2008,duret.22.cav,introtolanguagesautomata,pin:hal-00112831} for representing automata with graphs where: nodes are states, edges are transitions, states with incoming arrows that have no source state are initial states and circled states are accepting states. Accepting transitions have a circle around their label (see Fig.~\ref{fig:looping_aut} for an example of a transition-based NBA). 

\section{Method}
\label{sec:theory}

Previous work~\cite{Baier2008,timelines} have considered the synthesis of $\omega$-regular expressions from a state-based NBA by decomposing it into pairs of NFAs and synthesising regular expressions from them. Specifically, given a state-based NBA $B'=(Q, \Sigma, \Delta, Q_0, F)$, $L_\omega(B')$ agrees with the language
\begin{equation}
    \mathlarger{\sum}_{q_0\in Q_0,\, q \in F} \L_{q_0 q} \cdot (\L_{qq} \setminus \{\epsilon\})^\omega,
    \label{eq:state-basedexp}
\end{equation} %
where $\L_{ij}$ is the set of words that have runs from state $i$ to state $j$ in $B'$ and recognised by the NFA $\A_{ij} = (Q, \Sigma, \Delta, \{i\}, \{j\})$~\cite[Lemma 4.39]{Baier2008}.

However, the language given by the above equation does not agree with those recognised by transition-based NBAs. For instance, consider the transition-based NBA $B_1$ illustrated in Fig.~\ref{fig:looping_aut} and use $\tilde{F}$ in place of $F$ in Eq.~\ref{eq:state-basedexp}, making the language $\L_{01} \cdot (\L_{11} \setminus \{\epsilon\})^\omega + \L_{02} \cdot (\L_{22} \setminus \{\epsilon\})^\omega$. This language would contain $\omega$-words that the NBA does not accept. Take state $1$ of $B_1$ as an example. $\L_{11} \setminus \{\epsilon\}$ would include $c$, making $a(c)^\omega$ an element of $\L_{01} \cdot (\L_{11} \setminus \{\epsilon\})^\omega $ that is not accepted by $B_1$. 

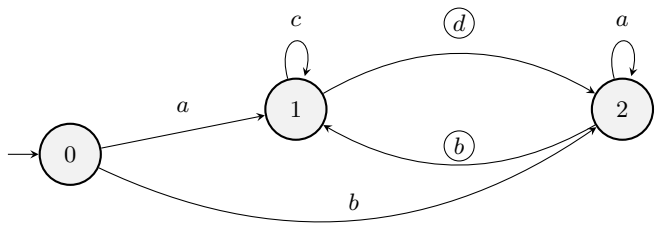
\begin{figure}[!ht]
    \centering
    \begin{tikzpicture}[every node/.style={inner sep=0pt}]
    \node[state, initial] (q0) {$0$};
    \node[state, above =0.6cm of q0, right of=q0] (q1) {$1$};
    \node[state, below =0.2cm of q1, right =3.5cm of q1] (q2) {$2$};
    \draw (q1) edge[loop above] node[yshift=0.2cm] (acc0) {$c$} (q1)
    (q2) edge[loop above] node[yshift=0.2cm] (acc1) {$a$} (q2)
    (q1) edge[bend left, above] node[draw, circle, yshift=0.2cm, minimum size=0.4cm, align=center] (acc2) {$d$} (q2)
    (q2) edge[bend left] node[draw, circle, yshift=0.25cm, minimum size=0.4cm, align=center] (acc3) {$b$} (q1)
    (q0) edge[above] node[yshift=0.25cm] {$a$} (q1)
    (q0) edge[bend right] node[yshift=0.25cm] {$b$} (q2);
    \end{tikzpicture}
    \caption{The transition-based NBA $B_1$. Edges with circled labels indicate accepting transitions.}
    \label{fig:looping_aut}
\end{figure}

We now present our method for synthesising an $\omega$-regular expression from a transition-based NBA. We use $B_1$ as a running example to illustrate our method. Our method makes use of the following three regular languages $\L_{ij,\,\text{all}}$, $\L_{ij,\,\text{rej}}$ and $\L_{ij,\,\text{acc}}$ to capture the semantics of transition-based acceptance, where $B$ is a transition-based NBA: 
\begin{itemize}
    \item $\L_{ij,\,\text{all}}$ is the set of nonempty words that have a run in $B$ from state $i$ ending upon reaching state $j$;
    \item $\L_{ij,\,\text{rej}}$ is the subset of $\L_{ij,\,\text{all}}$ given by words that have runs in $B$ that only take rejecting transitions from state $i$;
    \item $\L_{ij,\,\text{acc}}$ is the subset of $\L_{ij,\,\text{all}}$ given by words that have runs in $B$ that only take accepting transitions from state $i$.
\end{itemize} %
For example, consider $\L_{01,\,\text{all}}$ of $B_1$; the words in this language have runs from state $0$ to state $1$, visiting state $2$ finitely many times. However, it does not contain words that have runs from state $0$ to state $1$ that visit state $1$ multiple times. For instance, the word $acdab$ is not in $\L_{01,\,\text{all}}$, which can instead be described by the expression $a + b a^\ast b$.

The above three languages are sufficient to synthesise an $\omega$-regular expression that agrees with the language recognised by a transition-based NBA. Let $B=(Q, \Sigma, \Delta, Q_0, Acc)$ be a transition-based NBA and consider the $\omega$-regular language given by:
\begin{equation}
    \mathlarger{\sum}_{q_0\in Q_0, q\in \tilde{F}} \L_{q_0q,\,\text{all}} \cdot ((\L_{qq,\,\text{rej}})^\ast \cdot \L_{qq,\,\text{acc}})^\omega,
    \label{eq:transitionexp}
\end{equation}
with the terms $\L_{ij,\,x}$ (where $x \in \{\text{all},\,\text{rej},\,\text{acc}\}$) as defined above. For instance, in the case of $B_{1}$, Eq.~\ref{eq:transitionexp} corresponds to the language 
\begin{equation*}
    \L_{01,\,\text{all}}\cdot ((\L_{11,\,\text{rej}})^\ast \cdot \L_{11,\,\text{acc}})^\omega + \L_{02,\,\text{all}}\cdot ((\L_{22,\,\text{rej}})^\ast \cdot \L_{22,\,\text{acc}})^\omega.
\end{equation*}

We generate the regular expression corresponding to $\L_{ij,\,x}$ by synthesising it from an NFA that recognises the language. We denote these NFAs as $\A_{ij,\,x}$, $x \in \{\text{all},\,\text{rej},\,\text{acc}\}$ such that $\L_{ij,\,x}$ agrees with $L(\A_{ij,x})$. These are defined as follows: %
\begin{itemize}
    \item $\A_{ij,\,\text{all}}$ is $(Q', \Sigma, \Delta', \{i\}, \{j'\})$;

    \item $\A_{ij, \text{rej}}$ is ($Q'$, $\Sigma$, $\Delta' \setminus \{(i, t, y)\ |\ (i, t,y)\in Acc'\}$, $\{i\}$, $\{j'\}$);
    
    \item $\A_{ij, \text{acc}}$ is ($Q'$, $\Sigma$, $\Delta' \setminus \{(i, t, y)\ |\ (i, t,y) \notin Acc'\}$, $\{i\}$, $\{j'\}$);
\end{itemize} where $Q' = Q \cup \{j'\}$, $\Delta' = \{(x,t,j')\, |\, (x, t, j) \in \Delta\} \cup \{(x, t, y)\, |\, (x, t, y) \in \Delta\, \land\, y \neq j\}$ and $Acc' = \{(x,t,j')\, |\, (x, t, j) \in Acc\} \cup \{(x, t, y)\, |\, (x, t, y) \in Acc\, \land\, y \neq j\}$. These NFAs contain an additional state, $j'$, which copies state $j$ in $B$ \emph{without} any outgoing transitions. This enforces that only words with runs that end upon reaching state $j$ (in $B$) will be recognised by each NFA. Introducing the $j'$ state is necessary as otherwise, the NFAs would capture the empty word when $i=j$.

Going back to our running example, the NFAs associated with the pair of states $\langle$0, 1$\rangle$ in $B_1$ is illustrated in Fig.~\ref{fig:decomp_looping_aut}. Note that the NFAs associated with the second pair $\langle$0, 2$\rangle$ are isomorphic to those associated with the pair $\langle$0, 1$\rangle$ up to the transition symbols.

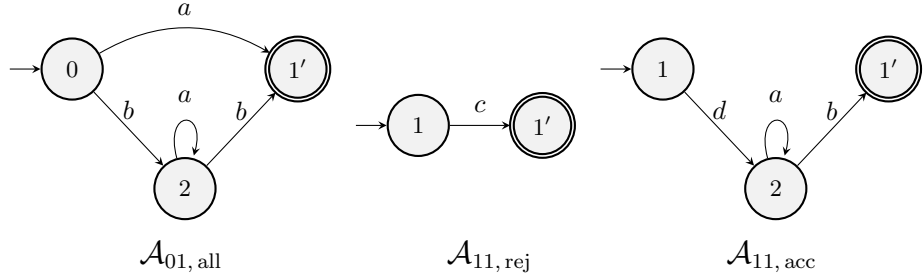
\begin{figure}[h]
    \centering
    \definecolor{brightcerulean}{rgb}{0.11, 0.67, 0.84}
    \begin{tikzpicture}[every node/.style={inner sep=0pt}]
    \node[state, initial] (q0) {$0$};
    \node[state, xshift =3cm, accepting] (q1prime) {$1'$};
    \node[state, below= 0.75cm of q0, xshift=1.5cm] (q2) {$2$};
    \node[right=1cm of q0, yshift=-2.5cm, label=center:{\large $\A_{01,\,\text{all}}$}] (label1) {};
    \draw (q0) edge[bend left] node[yshift=0.25cm]{\normalsize $a$} (q1prime);
    \draw (q0) edge[] node[yshift=0.25cm]{\normalsize $b$} (q2);
    \draw (q2) edge[loop above] node[yshift=0.25cm]{\normalsize $a$} (q2);
    \draw (q2) edge[] node[yshift=0.25cm]{\normalsize $b$} (q1prime);

    \node[state, initial, right=0.75cm of q1prime, yshift=-0.75cm] (s1) {$1$};
    \node[state, accepting, right=0.8cm of s1] (s1prime) {$1'$};
    \node[right=0.5cm of s1, yshift=-1.75cm, label=center:{\large $\A_{11,\,\text{rej}}$}] (label2) {};
    \draw (s1) edge node[yshift=0.25cm]{\normalsize $c$} (s1prime);

    \node[state, initial, right=0.75cm of s1prime, yshift=0.75cm] (r1) {$1$};
    \node[state, xshift =3cm, accepting] (r1prime) at (r1) {$1'$};
    \node[state, below= 0.75cm of r1, xshift=1.5cm] (r2) {$2$};
    \node[right=1cm of r1, yshift=-2.5cm, label=center:{\large $\A_{11,\,\text{acc}}$}] (label1) {};
    \draw (r1) edge[] node[yshift=0.25cm]{\normalsize $d$} (r2);
    \draw (r2) edge[loop above] node[yshift=0.25cm]{\normalsize $a$} (r2);
    \draw (r2) edge[] node[yshift=0.25cm]{\normalsize $b$} (r1prime);

    \end{tikzpicture}
    \caption{The NFAs $\A_{01,\,\text{all}}$, $\A_{11,\,\text{rej}}$ and $\A_{11,\,\text{acc}}$ of $B_1$. Regular expressions that agree with these automata are $a+ba^\ast b$, $c$ and $da^\ast b$, respectively.}
    \label{fig:decomp_looping_aut}
\end{figure}

\begin{proposition}
    Let $B=(Q, \Sigma, \Delta, Q_0, Acc)$ be a transition-based NBA  and $i,j$ be two states in $Q$. Then the languages $L(\A_{ij,\,\text{all}})$, $L(\A_{ij, \text{rej}})$ and $L(\A_{ij, \text{acc}})$ agree with $\L_{ij,\,\text{all}}$, $\L_{ij,\,\text{rej}}$ and $\L_{ij,\,\text{acc}}$, respectively.
    \label{prop:nfas}
\end{proposition}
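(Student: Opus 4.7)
The plan is to prove each of the three equalities by establishing a word-preserving bijection between accepting runs of the corresponding NFA and runs in $B$ of the appropriate shape. I would handle the \textbf{all} case first and derive the other two as refinements.

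For $\A_{ij,\,\text{all}}$, the key structural observation is that every transition in $\Delta$ entering $j$ has been redirected in $\Delta'$ to end at $j'$, and no transition exits $j'$. Therefore any accepting run must start at $i$, terminate at $j'$, never visit $j'$ earlier, and visit $j$ only as the initial state (which requires $i = j$). Given such an accepting run $i = q_0, q_1, \dots, q_{n-1}, j'$ on a word $w$, I relabel the terminal state as $j$: each intermediate transition $(q_{k-1}, t_k, q_k)$ for $k < n$ lies in $\Delta$ with $q_k \neq j$, and the final transition arises from some $(q_{n-1}, t_n, j) \in \Delta$. The resulting sequence is a run of $w$ in $B$ from $i$ ending upon first reaching $j$, and $w$ is nonempty because $i \neq j'$. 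The converse direction is symmetric: given a run in $B$ from $i$ ending upon first reaching $j$, I relabel the terminal $j$ as $j'$ to recover a valid accepting run in $\A_{ij,\,\text{all}}$. This yields $L(\A_{ij,\,\text{all}}) = \L_{ij,\,\text{all}}$.

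For the \textbf{rej} and \textbf{acc} cases, I would observe that $\A_{ij,\,\text{rej}}$ (respectively $\A_{ij,\,\text{acc}}$) is obtained from $\A_{ij,\,\text{all}}$ by deleting exactly those transitions starting at $i$ that originate from accepting (respectively rejecting) transitions of $B$. Under the bijection above, this deletion corresponds precisely to the constraint that every transition taken from state $i$ in the associated run of $B$ be rejecting (respectively accepting), which is the defining property of $\L_{ij,\,\text{rej}}$ (respectively $\L_{ij,\,\text{acc}}$). Hence the remaining two equalities follow immediately.

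The main obstacle lies in handling the edge case $i = j$. I must verify that even when the initial and terminal states coincide, the construction still excludes the empty word --- it does, since $j' \neq i = j$ and cannot be reached without reading at least one symbol --- and permits runs that legitimately leave $j$ and return via $j'$. Beyond this subtlety, the bookkeeping needed to exhibit the bijection and confirm that the input word is preserved throughout is routine.
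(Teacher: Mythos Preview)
Your proposal is correct and follows essentially the same approach as the paper's proof: both establish the \textbf{all} case by exploiting the $j \leftrightarrow j'$ relabeling of the terminal state (you phrase it as a run-level bijection, the paper as two word-level inclusions), and both then obtain the \textbf{rej} and \textbf{acc} cases by observing that the deleted transitions out of $i$ impose exactly the required rejecting/accepting constraint. Your treatment is in fact somewhat more explicit than the paper's, particularly in tracking why intermediate states avoid both $j$ and $j'$ and in flagging the $i=j$ edge case.
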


The proof of Prop.~\ref{prop:nfas} is given in Appendix.

Once the NFAs are generated from a given transition-based NBA, we can apply a method for synthesising regular expressions from these NFAs, like state elimination~\cite{introtolanguagesautomata}. We can then compose the synthesised regular expressions using Eq.~\ref{eq:transitionexp} into the overall $\omega$-regular expression that agrees with the language recognised by the original NBA. Consider again our running example and assume a method for synthesising a regular expression from an NFA. This would give the regular expressions $a+ba^\ast b$, $c$ and $da^\ast b$ for $\A_{01,\,\text{all}}$, $\A_{11,\,\text{rej}}$ and $\A_{11,\,\text{acc}}$ of $B_1$, respectively. The $\omega$-regular expression synthesised for the pair $\langle 0, 1\rangle$ is $(a+b a^\ast b)\cdot ((c)^\ast \cdot d a^\ast b)^\omega$. The overall expression synthesised for $B_1$ is $(a+b a^\ast b)\cdot ((c)^\ast \cdot d a^\ast b)^\omega + (b+a c^\ast d)\cdot ((a)^\ast \cdot b c^\ast d)^\omega$. 

We can now define an algorithm for synthesising an $\omega$-regular expression from a given transition-based NBA $B$. This is given in \ref{algo:method}. It loops over all pairs of $\langle$initial-state, accepting-state$\rangle$ in $B$ and composes synthesised regular expressions for every associated triplet of NFAs (i.e. $term$ in lines~\ref{methodline:term1} and \ref{methodline:term2}) into the final $\omega$-regular expression. The function \texttt{NFA2Regex} abstracts a method for synthesising a regular expression from an NFA. \ref{algo:method} always terminates provided $|Q|$ is finite and the method used for synthesising a regular expression from an NFA also always terminates.

\RestyleAlgo{ruled}
\SetAlgoSkip{}
\setlength{\algomargin}{2em}
\setlength{\textfloatsep}{10pt}
\begin{algorithm}[h]
    \LinesNumbered
    \SetKwFunction{NFARegex}{NFA2Regex}\SetKwFunction{Union}{Union}
    \SetKwInOut{Input}{Input}\SetKwInOut{Output}{Output}
    \Input{A nondeterministic B\"{u}chi automaton $B=(Q, \Sigma, \Delta, Q_0, Acc)$.}
    \Output{An $\omega$-regular expression describing the language recognised by $B$.}
    $expression \longleftarrow \emptyset$\;
    $\tilde{F} \longleftarrow \{x\,|\,(x,u,y) \in Acc\}$\;
        \ForEach{$q_0 \in Q_0$ }{\label{methodline:outerloop}
            \ForEach{$q \in \tilde{F}$}{\label{methodline:innerloop}
                $term \longleftarrow \emptyset$\;
                \uIf{$q \neq q_0$}{
                    $term \longleftarrow \NFARegex(\A_{q_0q,\,\text{all}})\cdot ((\NFARegex(\A_{qq,\,\text{rej}}))^\ast \cdot \NFARegex(\A_{qq,\,\text{acc}}))^\omega$\;\label{methodline:term1}
                }\uElse{
                $term \longleftarrow ((\NFARegex(\A_{qq,\,\text{rej}}))^\ast \cdot \NFARegex(\A_{qq,\,\text{acc}}))^\omega$\;\label{methodline:term2}
                }
                $expression \longleftarrow expression + term$\;\label{methodline:compose}
            }
            }
        \Return $expression$\;
    \caption{$\omega$-regular expression synthesis for transition-based NBAs\label{algo:method}.}
\end{algorithm}

It is interesting to notice that if we treat a state-based NBA $(Q, \Sigma, \Delta, Q_0, F)$ as its equivalent transition-based $B = (Q, \Sigma, \Delta, Q_0, Acc=\{(f, t, y)\in \Delta\ |\ f \in F\}$, then Eq.~\ref{eq:transitionexp} generalises to both acceptance types of NBAs. Specifically, Eq.~\ref{eq:transitionexp} would reduce to an isomorphic version of Eq.~\ref{eq:state-basedexp}. 

In the following section, we prove that the language given by Eq.~\ref{eq:transitionexp} agrees with $L_\omega(B)$.

\section{Theoretical Results}
\label{sec:theoreticalguarantees}

Intuitively, the language given by Eq.~\ref{eq:transitionexp} for a given transition-based NBA $B$ agrees with the language recognised by $B$ considering the fact that accepted $\omega$-words must have a run with three fundamental components:
\begin{enumerate}
    \item a finite prefix: a word with a run from an initial state $q_0$ that ends upon reaching $q$, a state with at least one outgoing transition;
    \item a finite (i.e., possibly zero) number of nonempty words with a run from state $q$ that ends upon reaching $q$ that only takes a rejecting transition from $q$; \label{step:2}
    \item a nonempty word with a run from state $q$ that ends upon reaching $q$ that only takes an accepting transition from $q$; steps \ref{step:2} and \ref{step:3} are repeated infinitely many times because an accepted $\omega$-word traverses an accepting transition infinitely many times.
    \label{step:3}
\end{enumerate}

We prove the soundness and completeness of our proposed method.

\begin{theorem}[Soundness and completeness]
    Let $B$ be a transition-based NBA. The $\omega$-regular language given by Eq.~\ref{eq:transitionexp} for $B$ agrees with $L_\omega(B)$.
\end{theorem}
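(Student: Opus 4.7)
The plan is to establish the two inclusions $L \subseteq L_\omega(B)$ and $L_\omega(B) \subseteq L$, where $L$ denotes the $\omega$-regular language on the right-hand side of Eq.~\ref{eq:transitionexp}. Throughout I would invoke Prop.~\ref{prop:nfas} to translate between membership in $\L_{ij,x}$ and the existence of runs in $B$ of the prescribed shape (finite prefixes from $i$ to $j$, with the first transition out of $i$ constrained to be rejecting or accepting when $x$ is $\text{rej}$ or $\text{acc}$).

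For \textbf{soundness}, I would take an arbitrary $\sigma \in L$, pick the summand $(q_0, q) \in Q_0 \times \tilde{F}$ containing $\sigma$, and decompose $\sigma = u \cdot v_1 \cdot v_2 \cdots$ with $u \in \L_{q_0 q,\text{all}}$ and each $v_k \in (\L_{qq,\text{rej}})^\ast \cdot \L_{qq,\text{acc}}$. Unfolding the Kleene star further gives $v_k = r_{k,1}\cdots r_{k,m_k}\cdot a_k$ with $r_{k,j} \in \L_{qq,\text{rej}}$ and $a_k \in \L_{qq,\text{acc}}$. By Prop.~\ref{prop:nfas} each factor is witnessed by a run of the required shape in $B$, and in particular the first transition out of $q$ in the run of $a_k$ is accepting. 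Splicing these runs yields an infinite run of $\sigma$ in $B$ that traverses an accepting transition at least once inside every block $v_k$, hence infinitely often, so $\sigma \in L_\omega(B)$.

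For \textbf{completeness}, I would fix $\sigma \in L_\omega(B)$ together with an accepting run $\rho$. Since $Acc$ is finite but $\rho$ traverses accepting transitions infinitely often, the pigeonhole principle yields some $(q, t, q') \in Acc$ taken infinitely often by $\rho$; in particular $q \in \tilde{F}$ and $\rho$ visits $q$ infinitely often. Let $q_0$ be the initial state of $\rho$, let $u$ be the prefix of $\sigma$ consumed up to $\rho$'s first visit of $q$, and enumerate the subsequent visits of $q$ by positions $n_1 < n_2 < \cdots$ (with $n_0$ set to the position of the first visit). For $k \geq 1$, I would define $w_k$ to be the segment of $\sigma$ consumed by $\rho$ strictly between positions $n_{k-1}$ and $n_k$. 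Each $w_k$ is nonempty, has a $B$-run from $q$ to $q$ that does not revisit $q$ in between, and lies in either $\L_{qq,\text{rej}}$ or $\L_{qq,\text{acc}}$ according to whether the single transition $\rho$ takes out of $q$ inside that segment is rejecting or accepting; by Prop.~\ref{prop:nfas}, $u \in \L_{q_0 q,\text{all}}$ as well.

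The last step, which I expect to be the main obstacle, is to regroup the $w_k$'s into blocks matching $(\L_{qq,\text{rej}})^\ast \cdot \L_{qq,\text{acc}}$. I would enumerate the indices $k_1 < k_2 < \cdots$ with $w_{k_j} \in \L_{qq,\text{acc}}$; this enumeration is infinite because every traversal of $(q, t, q')$ by $\rho$ begins some $w_k$ with an accepting transition out of $q$. Setting $k_0 = 0$ and $v_j = w_{k_{j-1}+1}\cdots w_{k_j}$, each $v_j$ is a concatenation of finitely many rejecting segments followed by exactly one accepting segment, so $v_j \in (\L_{qq,\text{rej}})^\ast \cdot \L_{qq,\text{acc}}$, and the $v_j$'s exhaust the post-$u$ suffix of $\sigma$, placing $\sigma$ in the $(q_0, q)$-summand of Eq.~\ref{eq:transitionexp}. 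Making this rigorous requires verifying that the cut points $n_k$ really partition the suffix (so every position of $\rho$ after the first visit of $q$ falls into exactly one $w_k$) and that accepting transitions out of $q$ occur only at the first step of some $w_k$ (so the classification of each $w_k$ is well-defined); both points follow from choosing the cut points to be precisely the visits of $q$ in $\rho$.
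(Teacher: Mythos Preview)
Your proposal follows essentially the same route as the paper: both directions are proved by cutting an accepted $\omega$-word at the successive visits of a fixed $q\in\tilde F$, classifying each $q$-to-$q$ segment by whether its first transition is accepting, and regrouping. Your write-up is in fact more explicit than the paper's---you spell out the pigeonhole step that singles out $q$, you describe the regrouping into blocks $v_j$ carefully, and your argument for $L\subseteq L_\omega(B)$ is a direct construction of an accepting run where the paper argues by contradiction.

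Two small points. First, your completeness direction has a boundary case you do not cover: if $q_0=q$, the ``first visit of $q$'' is at position $0$, so your prefix $u$ is $\epsilon$, which is \emph{not} in $\L_{q_0q,\text{all}}$. The paper notices this and, when $q=q_0$, folds what would have been $u$ into the first $v_i$; you need the same patch (take $w_1\in\L_{qq,\text{all}}$ as the prefix and observe that infinitely many accepting segments remain among $w_2,w_3,\dots$). Second, you do not actually need Prop.~\ref{prop:nfas}: the languages $\L_{ij,x}$ are \emph{defined} directly via runs in $B$, and Prop.~\ref{prop:nfas} only connects them to the auxiliary NFAs used for expression synthesis, which plays no role in the correctness argument.
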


\begin{proof}
    Let $\sigma\in L_\omega(B)$, i.e. it is accepted by $B$. We show that it is an element of the language given by Eq.~\ref{eq:transitionexp}. By assumption, $\sigma$ has a run starting from an initial state $q_0$ and traverses an outgoing accepting transition from a state $q$ infinitely many times. Consider the decomposition of $\sigma$ into nonempty finite words that have runs that end upon reaching $q$: $\sigma = u \cdot v_1 \cdot v_2 \dots$ where $u, v_i \in \Sigma^\ast \setminus \{\epsilon\}$. If $q \neq q_0$, $u$ is an element of $\L_{q_0 q,\,\text{all}}$ by construction, otherwise, we treat $u$ as the first $v_i$. Since $\sigma$ is accepted, there is a finite word $v_{n}$ in $\sigma$ that is the first with a run from state $q$ that ends upon reaching $q$ taking an outgoing accepting transition from $q$, i.e. $v_{n} \in L_{qq,\,\text{acc}}$. It must be the case (by construction) that each $v_{1},\dots,v_{n-1}$ are elements of $\L_{qq,\,\text{rej}}$. Therefore, the decomposition  $v_{1} \cdot (\dots) \cdot v_{n}$ is by construction an element of $(\L_{qq,\,\text{rej}})^\ast \cdot \L_{qq,\,\text{acc}}$. Because $\sigma$ is accepted there must be infinitely many of these $v_n$ words. The entire decomposition $v_1 \dots v_{n+1}\dots$ is therefore a word in $((\L_{qq,\,\text{rej}})^\ast \cdot \L_{qq,\,\text{acc}})^\omega$. Hence, by construction, $\sigma \in \L_{q_0 q,\,\text{all}} \cdot ((\L_{qq,\,\text{rej}})^\ast \cdot \L_{qq,\,\text{acc}})^\omega \subseteq \sum_{q_0\in Q_0, q\in \tilde{F}} \L_{q_0q,\,\text{all}} \cdot ((\L_{qq,\,\text{rej}})^\ast \cdot \L_{qq,\,\text{acc}})^\omega$. 
    
    Let $\sigma \in \Sigma^\omega$ be an $\omega$-word that is an element of the language given by Eq.~\ref{eq:transitionexp} for $B$. We show that $\sigma \in L_\omega(B)$. Assume that $\sigma \notin L_\omega(B)$. $\sigma$ must be an element of some $\L_{q_0q,\,\text{all}} \cdot ((\L_{qq,\,\text{rej}})^\ast \cdot \L_{qq,\,\text{acc}})^\omega$, for some $q_0$ initial state in $B$ and state $q$ with at least one outgoing accepting transition in $B$ . By assumption, $\sigma$ must not have any runs that infinitely traverse an accepting transition from $q$. However, by construction, $\sigma$ must have a run from $q_0$ that reaches $q$ and infinitely traverses an accepting transition from $q$ due to the term $\L_{qq,\,\text{acc}}$. We reach a contradiction: $\sigma$ must be accepted by $B$ and $\sigma \in L_\omega(B)$. \qed
\end{proof}

\subsubsection{Time complexity} The time complexity of synthesising a regular expression from an NFA is $\mathcal{O}(n^3)$ where $n$ is the number of states in the NFA. This result is obtained by recognising that the problem is isomorphic to the all-pairs shortest path problem~\cite{Rote1985} which takes $\mathcal{O}(n^3)$ time using the Floyd-Warshall algorithm~\cite{intro_to_algorithms}. \ref{algo:method} uses a nested loop over the initial states and the accepting states meaning the maximum number of iterations is $\mathcal{O}(|Q|^2)$ where $|Q|$ is the number of states in the NBA. Each iteration generates a triplet of NFAs and synthesises regular expressions from these NFAs. So each iteration has a time complexity of the order $\mathcal{O}(3(|Q|+1)^3)$. The overall time complexity is $\mathcal{O}(|Q|^2) \times \mathcal{O}(3(|Q|+1)^3)$ which is of the same class as $\mathcal{O}(|Q|^5)$. In practice, the time complexity is closer to $\mathcal{O}(|Q|^4)$ because NBAs usually have one initial state.

\subsubsection{Descriptional complexity} We use the size of the synthesised expressions to measure descriptional complexity. The work in~\cite{surveyonautomata+regex} proves that $\mathcal{O}(|\Sigma| 2^{\Theta(n)})$ is necessary and sufficient for a regular expression describing the language of an NFA. \ref{algo:method} inherits this complexity because we use regular expressions to construct the $\omega$-regular expression. Specifically, the worst-case descriptional complexity of the $\omega$-regular expressions synthesised by \ref{algo:method} is $\mathcal{O}(|Q|^2) \times \mathcal{O}(3 |\Sigma| 2^{\Theta(|Q|+1)}) = \mathcal{O}(|Q|^2 |\Sigma| 2^{\Theta(|Q|+1)})$. 

\section{Experimental Evaluation}
\label{sec:experiment}

This section empirically substantiates the practical benefits of synthesising $\omega$-regular expressions from transition-based NBAs instead of state-based NBAs. For this, we considered the dataset of LTL formulas presented in~\cite{timelines}, their respective state-based and transition-based NBAs, computed using Spot~\cite{duret.22.cav}, and evaluated the $\omega$-regular expressions synthesised from both NBAs. The evaluation aims to answer the following questions:
\begin{itemize}
    \setlength{\itemsep}{0.5em}
    \setlength{\itemindent}{0.35cm}
    \setlength{\labelwidth}{2em}
    \setlength{\labelsep}{0.5em}
    \item [\textbf{RQ1}]Do transition-based NBAs give smaller $\omega$-regular expressions compared to those from state-based NBAs?
    \item [\textbf{RQ2}]Are there specific types of LTL formulas that give more compact $\omega$-regular expressions from transition-based NBAs as opposed to state-based NBAs?
    \item [\textbf{RQ3}]Are there characteristics or patterns in LTL formulas that indicate whether transition-based NBAs will produce more compact $\omega$-regular expressions versus state-based NBAs?
    \item [\textbf{RQ4}]Does the compactness of transition-based NBAs 
    enable processing more complex LTL formulas instead of state-based NBAs in the same time limit?
\end{itemize}

Throughout the experiments, we used the state elimination method~\cite{introtolanguagesautomata} for synthesising regular expressions from NFAs. Briefly, the method iteratively eliminates the states between an initial and accepting state resulting in a two-state automaton with one transition labelled with an equivalent regular expression to the original NFA. We used the algorithm and implementation from~\cite{timelines} to synthesise expressions from state-based NBAs. Their method synthesises $\omega$-regular expressions from state-based NBAs computed using Spot from an LTL formula. Our method used transition-based NBAs computed using Spot. We used a machine with 32GB RAM, an Intel Core i7-1260P processor and version 2.11.6 of Spot. We allocated a 120-second limit per formula for Spot to compute the NBA and its syntax tree (i.e., the tree representing the $\omega$-regular expression) --- this time limit included simplification when used. A further 120-second limit was imposed when determining each metric from the expression's syntax tree. Furthermore, we used two approaches once the expression was computed. In the first, we computed the metrics with \emph{no simplification} of the syntax tree. In the second, the syntax tree is simplified using a heuristic of 8 common identities for ($\omega$-regular and regular) expressions taken from~\cite{timelines}. These include $x+xy^\ast \Rightarrow xy^\ast$ and $xyy^\omega \Rightarrow xy^\omega$, see~\cite{timelines} for the remainder.

\subsubsection{Metrics} We compared the expressions computed from NBAs with transition-based and state-based acceptance using three metrics: \emph{reverse Polish notation} (\emph{rpn}), \emph{timeline length} (\emph{tllen}) and \emph{star height} (\emph{h}). The \emph{rpn} of an expression is the total number of nodes in its syntax tree (i.e., its size without parentheses). %
The \emph{timeline length}~\cite{timelines} is the extension of the length of a regular expression (i.e., the number of symbols in the longest non-repeating path through the expression~\cite{complexity_re}). The \emph{star height} is the maximum depth of nested Kleene stars in the expression.

To ensure a fair comparison between our proposed method and the approach presented in~\cite{timelines}, we evaluated both methods using the same metrics employed in their study. This work used \emph{timeline length} and \emph{star height} to quantify the size of the proposed graphical display of the traces accepted by the LTL formula. These metrics were not ideal for our purposes as we targeted the changes to the entire expression. We mainly focused on \emph{reverse Polish notation} because it is indicative of the \emph{size} of an expression~\cite{surveyonautomata+regex} while remaining practical to compute. 

\subsubsection{Datasets}

We used the dataset provided in~\cite{timelines} to evaluate our method. The dataset comprises LTL formulas collected from~\cite{anza_benchmarks2,anza_benchmarks1,alaska_benchmarks,aac_benchmarks,acacia_benchmarks}. We include a further $98$ formulas (for a total of $185$) not evaluated in~\cite{timelines}. We also compared the synthesised expressions for common patterns for LTL formulas, as outlined by Dwyer~\cite{dwyer_patterns}, and their complements. These patterns serve as key descriptors of typical properties in LTL utilised in industrial contexts, see~\cite{patterns_servicebased} and~\cite{patterns_robots} for examples. We refer to the first dataset as the \emph{timelines} dataset and the second as the \emph{patterns} dataset. 

\subsection{Results}

In this section, we present the results of our experiments for answering the questions above. We only show the results for experiments where the expression's syntax tree is not simplified. We also present additional results that use simplification in the Appendix and we found that they are consistent with the data presented here.

\subsubsection{RQ1} Table~\ref{tab:timelines} summarises the results for the timelines dataset with and without simplification. On average, transition-based NBAs produced expressions (without simplification) $13.0\%$ smaller in \emph{rpn} than their state-based equivalent. Table~\ref{tab:timelines} also contains the number of formulas where the metric improved ($\downarrow$), stayed the same ($=$) or worsened ($\uparrow$) when using a transition-based NBA instead of a state-based NBA. We do not include formulas that timed out.

\begin{table}[!h]
\centering
\caption{ %
This table presents the results for the timelines dataset. We report the mean and standard deviations for the percentage decrease in the metric when using transition-based NBAs instead of state-based NBAs to synthesise $\omega$-regular expressions.}
\label{tab:timelines}
\resizebox{\textwidth}{!}{\begin{tabular}{@{}clccclccclccc@{}}
\toprule
Simplified & \multicolumn{4}{c}{\emph{rpn}} & \multicolumn{4}{c}{\emph{tllen}} & \multicolumn{4}{c}{\emph{h}}  \\ \midrule
- &
  \multicolumn{1}{c}{$\ $Decrease (\%)$\ $} &
  $\ \downarrow\ $ &
  $\ =\ $ &
  $\ \uparrow\ $ &
  \multicolumn{1}{c}{$\ $Decrease (\%)$\ $} &
  $\ \downarrow\ $ &
  $\ =\ $ &
  $\ \uparrow\ $ &
  \multicolumn{1}{c}{$\ $Decrease (\%)$\ $} &
  $\ \downarrow\ $ &
  $\ =\ $ &
  $\ \uparrow\ $ \\ \midrule
No         & $13.0 \pm 26.1$ & 73 & 40 & 2 & $11.8 \pm 27.8$  & 45  & 67 & 3 & $2.3 \pm 13.5$ & 19 & 85 & 0 \\
Yes        & $9.8 \pm 22.9$  & 48 & 60 & 2 & $9.3 \pm 2.41$   & 29  & 80 & 1 & $2.3 \pm 13.5$ & 15 & 83 & 0 \\ \bottomrule
\end{tabular}}
\end{table}

We depict scatter plots in Fig.~\ref{fig:scatter_metrics_nosimp_tls} for each metric we evaluated. Each circle in the plot represents an LTL formula from the timelines dataset and indicates the values of the metrics for the expressions (without simplification) when synthesised from transition-based and state-based NBAs. 
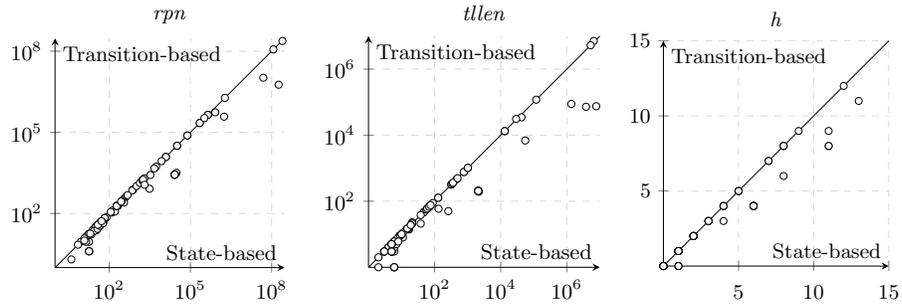
\begin{figure}[!h]
    \centering
    \captionsetup[subfigure]{labelformat=empty}
    \resizebox{\textwidth}{!}{%
    \begin{tabular}{ccc}
    \resizebox{0.33\textwidth}{!}{%
        \begin{tikzpicture}
            \begin{axis}[
                axis x line=middle,
                axis y line=middle,
                xmin=1, xmax=3e8,
                xmode=log,
                ymode=log,
                width=5cm, height=5cm,     
                grid = major,
                grid style={dashed, gray!30},
                title={\emph{rpn}},
                title style={yshift=-1mm},
                ylabel style={align=flush left},
                ylabel={Transition-based},
                xlabel=State-based,
                xtick = {1e2, 1e5, 1e8},
                ytick= {1e2, 1e5, 1e8},
                legend style={at={(0.5,-0.05)}, anchor=north,
                legend columns = 2, /tikz/every even column/.append style={column sep=0.25cm}},
             ] %
            \addplot[only marks,fill opacity=0.75, fill=white, mark size=1.5pt] table [x=nosimplify size, y=transition size, col sep=comma] {data/timelines_data.csv};
                \addplot[post/.style={mark=none},domain=1:3e8, samples=100, black]{x};
            \end{axis}
        \end{tikzpicture}
    } & \resizebox{0.33\textwidth}{!}{ %
        \begin{tikzpicture}
            \begin{axis}[
                axis x line=middle,
                axis y line=middle,
                xmin=1,
                xmode=log,
                ymode=log,
                width=5cm, height=5cm,     
                grid = major,
                grid style={dashed, gray!30},
                title={\emph{tllen}},
                title style={yshift=-1mm},
                ylabel style={align=flush left},
                ylabel={Transition-based},
                xlabel=State-based,
                legend style={at={(0.5,-0.05)}, anchor=north,
                legend columns = 2, /tikz/every even column/.append style={column sep=0.25cm}},
             ] %
            \addplot[only marks,fill opacity=0.75, fill=white, mark size=1.5pt] table [x=nosimplify length
            , y=transition length
            , col sep=comma] {data/timelines_data.csv};
                \addplot[post/.style={mark=none},domain=1:1e7, samples=100, black]{x};
                    \end{axis}
        \end{tikzpicture}
        \label{subfig:scatter_timelines}
    } & \resizebox{0.33\textwidth}{!}{ %
        \begin{tikzpicture}
            \begin{axis}[
                axis x line=middle,
                axis y line=middle,
                xmin=0,
                width=5cm, height=5cm,     
                grid = major,
                grid style={dashed, gray!30},
                title={\emph{h}},
                title style={yshift=-1mm},
                xticklabel style={yshift=-0.1cm},
                ylabel style={align=flush left},
                ylabel={Transition-based},
                xlabel=State-based,
                legend style={at={(0.5,-0.05)}, anchor=north,
                legend columns = 2, /tikz/every even column/.append style={column sep=0.25cm}},
             ] %
            \addplot[only marks,fill opacity=0.75, fill=white, mark size=1.5pt] table [x=nosimplify starheight
            , y=transition starheight, col sep=comma] {data/timelines_data.csv};\addplot[post/.style={mark=none},domain=0:15, samples=100, black]{x};
                    \end{axis}
        \end{tikzpicture}
        } 
    \end{tabular}%
    }
    \caption{
    This presents a comparison between expressions synthesised from state-based and transition-based NBAs. Each circle represents an expression synthesised from  LTL formulas in the timelines dataset without simplification. Circles below $y=x$ show that the metric was smaller for the expression synthesised from the transition-based NBA compared to the one from the state-based NBA.}
    \label{fig:scatter_metrics_nosimp_tls}
\end{figure} 

\subsubsection{RQ2} We grouped each formula by type from the temporal properties hierarchy~\cite{mp_hierarchy}. We present box plots displaying the distribution of proportional reductions in \emph{rpn} when synthesising from a transition-based NBA for each type of formula. For example, an \emph{rpn} change of $0$ indicates that the expression synthesised from the transition-based NBA had the same \emph{rpn} as the one synthesised from the state-based NBA. We observed in Fig.~\ref{fig:timelines-boxplots} a $53.4\%,\, 22.3\%$ and $20.7\%$ ($47.3\%,\, 11.7\%$ and $21.0\%$) reduction in \emph{rpn}, without (with) simplification, for recurrence, obligation and reactivity formulas, respectively. We omit \emph{tllen} and \emph{h} as they are less relevant to studying the overall compactness of the synthesised expressions.
\begin{figure}[!h]
    \centering
    \subfloat[Without simplification]{\begin{tikzpicture}
	\pgfplotstableread[col sep=comma]{data/timelines_mphierarchy.csv}\csvdata
    	\begin{axis}[
    		boxplot/draw direction = y,
    		axis x line* = bottom,
    		axis y line = left,
    		xtick = {1, 2, 3, 4, 5, 6},
            xticklabel style={
                /pgf/number format/fixed,
                /pgf/number format/precision=1,
                /pgf/number format/fixed zerofill,
                anchor=north,
                align=right,
                font=\scriptsize,
                yshift=-0.3cm,
                xshift=-0.5cm,
                rotate=30
            },
    		xticklabels = {Reactivity, Recurrence, Obligation, Persistence, Safety, Guarantee},
    		xtick style = {draw=none}, 
            ylabel style = {align=center},
    		ylabel = {\emph{rpn} change\\(proportion)},
            ymin=-1, ymax=0.2,
            width=6cm, height=4.5cm,
    	]
    		\foreach \n in {0,...,5} {
    			\addplot [-, boxplot, draw=black,mark=*, fill=white, mark size=1.5pt] table[y index=\n] {\csvdata};
    		}
    	\end{axis}
    \end{tikzpicture}\label{subfig:timelines_unsimpboxplot}} %
    \subfloat[With simplification]{    \begin{tikzpicture}
    	\pgfplotstableread[col sep=comma]{data/simptimelines_mphierarchy.csv}\csvdata
    	\begin{axis}[
    		boxplot/draw direction = y,
    		axis x line* = bottom,
    		axis y line = left,
    		xtick = {1, 2, 3, 4, 5, 6},
            xticklabel style={
                /pgf/number format/fixed,
                /pgf/number format/precision=1,
                /pgf/number format/fixed zerofill,
                anchor=north,
                align=right,
                font=\scriptsize,
                yshift=-0.3cm,
                xshift=-0.5cm,
                rotate=30
            },
    		xticklabels = {Reactivity, Recurrence, Obligation, Persistence, Safety, Guarantee},
    		xtick style = {draw=none}, 
            ymin=-1, ymax=0.2,
            width=6cm, height=4.5cm,
    	]
    		\foreach \n in {0,...,5} {
    			\addplot [-, boxplot, draw=black,mark=*, fill=white, mark size=1.5pt] table[y index=\n] {\csvdata};
    		}
    	\end{axis}
    \end{tikzpicture}
    \label{subfig:timelines_simpboxplot}
    }
    \caption{
    These plots present the proportional reductions in \emph{rpn} for the formulas in the timelines dataset grouped by their type. Circles are outliers and represent formulas more than $1.5\times$ outside the interquartile range.}
    \label{fig:timelines-boxplots}
\end{figure}
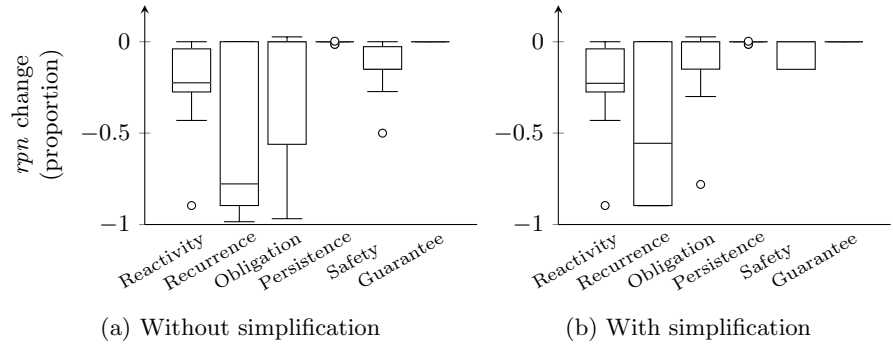

\subsubsection{RQ3} We replicated the experiments using the patterns dataset to identify the LTL patterns that give smaller $\omega$-regular expressions from transition-based NBAs. We present the results of these experiments in Fig.~\ref{fig:rpn_patterns}. These plots depict the \emph{rpn} of the expressions synthesised from both the transition-based and state-based NBA without simplification. The largest proportional reduction in \emph{rpn}, $81.7\%$ ($8307$ to $1521$), occurred for the complement of pattern 55 (with and without simplification). Due to scale, some patterns (and their complements) cannot be seen in Fig.~\ref{fig:rpn_patterns}.

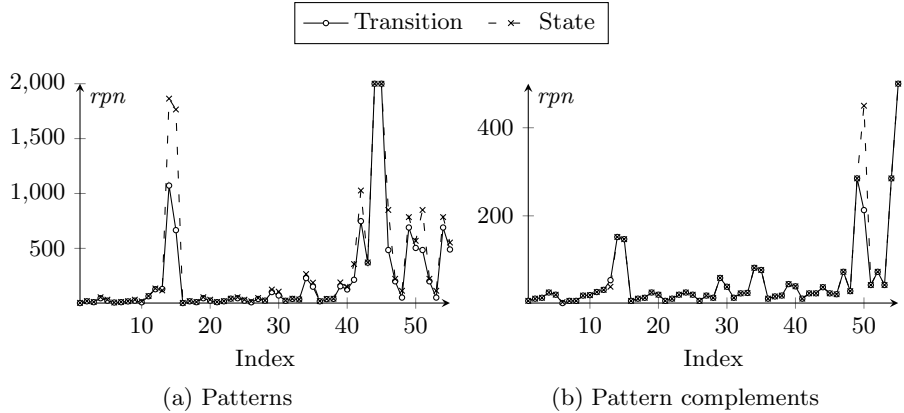
\begin{figure}[!h]
    \centering
    \begin{tikzpicture}
        \node [above] at (current bounding box.north) {\ref*{mylegend}};
    \end{tikzpicture}
    \subfloat[c][Patterns]{ %
    \begin{tikzpicture}
	\pgfplotstableread[col sep=comma]{data/patterns_rpns.csv}\csvdata
            \begin{axis}[
                axis x line=middle,
                axis y line=middle,
                xmin=1, xmax=55,
                restrict y to domain*=1:2e3,
                width=6.5cm, height=4.5cm,     
                grid style={gray!30},
                ylabel=$\emph{rpn}$,
                xlabel={Index},
                xlabel near ticks,
                legend style={at={(0.5,-0.05)}, anchor=north,
                legend columns = 2, /tikz/every even column/.append style={column sep=0.25cm}},
                xtick={10,20,30,40,50},
                legend to name={mylegend},
             ] %
                \addplot[-, mark size=1pt, mark=*, draw opacity=1, fill=white] table [x=Pattern
                , y=Trans rpn
                , col sep=comma] {\csvdata};
                \addlegendentry{Transition}
                \addplot[loosely dashed, -, mark=x, mark size=1.5pt, draw opacity=0.5] table [x=Pattern, y=Unsimp rpn, col sep=comma] {\csvdata};
            \addlegendentry{State}
            \end{axis}
    \end{tikzpicture}\label{subfig:patt_rpn_patterns}
    } %
    \subfloat[c][Pattern complements]{    \begin{tikzpicture}
    	\pgfplotstableread[col sep=comma]{data/patterns_rpns.csv}\csvdata
            \begin{axis}[
                axis x line=middle,
                axis y line=middle,
                xmin=1, xmax=55,
                restrict y to domain*=1:5e2,
                width=6.5cm, height=4.5cm,     
                grid style={dashed, gray!30},
                ylabel=$\emph{rpn}$,
                xlabel=Index,
                legend style={at={(0.5,-0.05)}, anchor=north,
                legend columns = 2, /tikz/every even column/.append style={column sep=0.25cm}},
                xlabel near ticks,
                xtick={10,20,30,40,50},
             ] %
                \addplot[-, mark size=1pt, mark=*, draw opacity=1, fill=white] table [x=Pattern
                , y=comp Trans rpn
                , col sep=comma] {\csvdata};
                \addplot[loosely dashed, -, mark=x, mark size=1.5pt, draw opacity=0.5] table [x=Pattern, y=comp Unsimp rpn, col sep=comma] {\csvdata};
            \end{axis}
    \end{tikzpicture}
    \label{subfig:comp_rpn_patterns}
    } \hfil
    \caption{Several patterns and complements give significantly smaller \emph{rpn} for expressions from transition-based NBAs. However, most formulas in this dataset show no or negligible improvement.}
    \label{fig:rpn_patterns}
\end{figure}
\subsubsection{RQ4} Using transition-based NBAs instead of state-based NBAs, with simplification, allowed three more LTL formulas to be evaluated for each metric from the timelines dataset. Similarly, using transition-based NBAs without simplification allowed two more formulas to be evaluated for \emph{rpn} and \emph{tllen}. 

\subsubsection{Summary} Our experimental results demonstrate that synthesising $\omega$-regular expressions directly from transition-based NBAs yields significantly more compact expressions than state-based NBAs. In most cases, the transition-based metrics behave as if bounded above by its associated state-based metric (see Fig.~\ref{fig:scatter_metrics_nosimp_tls}). Expressions synthesised from transition-based NBAs were, on average, $13\%$ smaller in \emph{rpn} without simplification. Recurrence, obligation and reactivity-type LTL formulas benefited the most, with reductions in \emph{rpn} of up to $98.5\%$ ($89.7\%$ with simplification). Specific LTL patterns saw significant \emph{rpn} reductions although most patterns did not see much improvement. Furthermore, transition-based NBAs enabled more LTL formulas to be evaluated within the given timeouts. These findings demonstrate the advantages of leveraging transition-based acceptance when synthesising $\omega$-regular expressions from NBAs, particularly for LTL formulas used in industrial settings.

\section{Discussion}
\label{sec:discussion}
We observed an improvement in \emph{rpn} for the synthesised $\omega$-regular expressions for formulas of the types: reactivity; recurrence; obligation and safety. This finding explains why we observed less reduction in the patterns dataset: only 21 formulas belonged to these three categories.
Recurrence-type formulas demonstrated the most improvement. Recurrence formulas were expected to give smaller $\omega$-regular expressions when using transition-based NBAs instead of state-based NBAs because they describe events occurring infinitely often~\cite{mp_hierarchy} but do not necessarily occur at every timestep. Building on our results, we predict that the LTL formulas that belong to these classes derive the most benefit when represented with transition-based NBAs. Our results suggest that applications involving these types of formulas stand to gain the most from transition-based NBA usage. 

However, our experiments indicate that persistence and guarantee-type formulas derive no benefit from being represented using a transition-based NBA: they tended to produce expressions of equal length when represented using both types of NBA. Briefly, guarantee formulas will derive no benefit because the infinite suffixes of all terms in the expression will be $\Sigma^\omega$ and persistence formulas have $\L_{qq,\,\text{rej}} = \emptyset$ for all $q \in \tilde{F}$ in their NBAs.

The proposed method using the transition-based NBAs enabled more formulas from the timelines dataset to be evaluated. More formulas timed out when the syntax tree was simplified; this was expected because the syntax trees were large and many nodes had to be traversed. However, the transition-based NBA approach with no simplification determined an extra two formulas for \emph{rpn} and \emph{tllen} but only one extra for \emph{h}. We expected that the proposed method would improve the \emph{rpn} and \emph{tllen} of the synthesised $\omega$-regular expressions but minimal improvement in \emph{h}. This agrees with our results as the average reduction of \emph{rpn} and \emph{tllen} was greater than the reduction of \emph{h}. This also explains why the proposed method only evaluated one extra formula instead of two.

During initial experimentation, we observed that $\omega$-regular expressions computed from transition-based NBAs would be larger than the state-based NBA's expression. We expected the state-based expression to behave as an upper bound for the transition-based expression. Two factors make up this issue. Firstly, Spot can produce transition-based NBAs with a non-optimal labelling of accepting transitions. This labelling increases the number of states in $\tilde{F}$ above the number of accepting states in the state-based NBA produced by Spot. Secondly, Spot can produce NBAs with different state numbering depending on the type of acceptance specified for the same LTL formula. NFA to regular expression algorithms depend on the order of states. This issue resulted in edge cases where the order of the state-based NBA was more suited to producing smaller expressions than the state-ordering of the transition-based NBA.

We mitigated the first issue by computing both automata and checking if the number of states in $\tilde{F}$ (transition-based NBA) was less than that of accepting states in the state-based NBA. If yes, we synthesised the expression from the transition-based automaton; otherwise, we used the state-based automaton (as if it were transition-based). This approach allows us to avoid the labelling of accepting transitions by Spot without impacting our results. However, we were unable to control the ordering of states, which had a minor impact on our experimental data. We believe this did not significantly affect the overall experiment and only influenced a small number of formulas. Solving this issue would require more control of the NBA generation process, specifically how Spot numbers states during the generation of an NBA.

\section{Related Work} To the best of our knowledge, no method exists for directly synthesising an $\omega$-regular expression from a transition-based B\"{u}chi automaton. Previous work, such as~\cite{Baier2008}, establishes the equivalence between $\omega$-regular languages and NBAs by constructing $\omega$-regular expressions describing the languages recognised by state-based NBAs. Similarly, other work~\cite{pin:hal-00112831} provides a method for translating a transition-based NBA into a state-based NBA. In principle, it could be possible to synthesise $\omega$-regular expressions from transition-based NBA by first translating the latter into a state-based NBA and synthesising an $\omega$-regular expression that agrees with the language recognised by this state-based NBA. However, this approach has never been investigated and has the potential to generate more complex $\omega$-regular expressions as the translation into state-based NBA can lose the compactness of
transition-based NBAs. In Section~\ref{sec:experiment} we have demonstrated and evaluated the benefit of synthesising directly from transition-based NBA. A reader might wonder whether our method of relaxing transition-based NBAs into NBAs with ``pseudo-accepting states'' might be related to methods for translating transition-based into state-based NBAs. However, the two are different, as our relaxation enables the handling of states that have both outgoing accepting and rejecting transitions while avoiding the potential blow-up of states that a full translation into an NBA with accepting states would cause. 
\section{Conclusion}
\label{sec:conclusion}

We proposed a novel method for directly synthesising  $\omega$-regular expressions from transition-based NBAs, contrary to existing work that 
synthesise $\omega$-regular expressions from state-based NBAs. Our approach offers modularity in the sense that the \texttt{NFA2Regex} function (see \ref{algo:method}) enables the use of any method for synthesising regular expressions from NFAs: which allowed us to leverage existing research on synthesising regular expressions from NFAs. A comprehensive survey~\cite{surveyonautomata+regex} provides the common methods for synthesising an expression from an NFA. These methods include state elimination~\cite{stateelim1,introtolanguagesautomata} and solving characteristic equations~\cite{derivativesregex} by applying Arden's theorem~\cite{ardenstheorem,Conway1971RegularAA}.

We empirically demonstrated that using this method preserves the compactness observed in transition-based NBAs thus offering a quantifiable advantage over state-based NBAs. Our experiments confirmed this compactness property across various metrics, particularly benefiting recurrence languages. This partially alleviates the dependency on the (computationally hard~\cite{MALCHER2004375}) simplification of $\omega$-regular expressions that would otherwise be necessary when using a state-based NBA. We were also able to apply our method to 
additional LTL formulas using transition-based NBAs that timed out when using state-based NBAs, indicating promising performance gain.

Our work contributes to the understanding of the compactness observed in transition-based NBAs and our experiments demonstrate that there is potential for improved scalability when using transition-based NBAs over state-based NBAs to represent LTL formulas. Future research directions include a formal proof of the boundedness of expressions synthesised from minimal transition-based NBAs and for extending the work in~\cite{mltl_via_regexs} to LTL. We also propose that the syntax tree of an $\omega$-regular expression offers a natural means for synthesising an LTL formula from an $\omega$-regular language and could be an alternative avenue to~\cite{automata_2_ltl}.
\subsubsection{Acknowledgements}
This work was supported by the UK EPSRC grants 2760033 and EP/X040518/1.

\section*{Appendix}
\label{sec:appendix}
\subsection*{Proof of Proposition~\ref{prop:nfas}}
\begin{proof}
Let $w \in \L_{ij,\,\text{all}}$. By definition, $w$ is a nonempty word with a run in $B$ from state $i$ that ends upon reaching state $j$. Consider this run of $w$ in $\A_{ij,\,\text{all}}$: it will begin in state $i$ and end upon reaching state $j'$ (because transitions into state $j$ in $B$ are transitions into state $j'$ in $\A_{ij,\,\text{all}}$). $w$ is accepted by $\A_{ij,\,\text{all}}$ because $j'$ is an accepting state. We have that $\L_{ij,\,\text{all}} \subseteq L(\A_{ij,\,\text{all}})$. 
    
Similarly, let $w \in L(\A_{ij,\,\text{all}})$. $w$ must be nonempty because $i \neq j'$. By construction, $w$ must have a run from $i$ ending upon reaching $j$ in $B$. We have that $L(\A_{ij,\,\text{all}}) \subseteq \L_{ij,\,\text{all}}$. $\L_{ij,\,\text{all}}$ agrees with $L(\A_{ij,\,\text{all}})$.
\qed
It suffices to consider the construction of the NFAs $\A_{ij,\,\text{rej}}$ and $\A_{ij,\,\text{acc}}$ to see that they agree with $\L_{ij,\,\text{rej}}$ and $\L_{ij,\,\text{acc}}$, respectively. Both NFAs recognise languages that are subsets of the language recognised by $\A_{ij,\,\text{all}}$ and we have proven that $\L_{ij,\,\text{all}}$ is the language recognised by $\A_{ij,\,\text{all}}$. The construction of $\A_{ij,\,\text{rej}}$ ($\A_{ij,\,\text{acc}}$) forces every transition taken from state $i$ to be rejecting (accepting) thus restricting the language recognised by $\A{ij,\,\text{rej}}$ ($\A{ij,\,\text{acc}}$) to $\L_{ij,\,\text{rej}}$ ($\L_{ij,\,\text{acc}}$).
\end{proof}
\subsection*{Results from additional experiments}

We provide below results from additional experiments for \textbf{RQ1} that further support the superiority of our method.  Table~\ref{tab:patterns} demonstrates results using the pattern dataset. The figures show that transition-based NBAs representing patterns tended to yield smaller expressions than the state-based NBAs.
\begin{table}[!ht]
\centering
\caption{This table presents the results for the patterns dataset. We report the mean and standard deviations for the percentage decrease in the metric when using transition-based NBAs instead of state-based NBAs to synthesise $\omega$-regular expressions. 
}
\label{tab:patterns}
\resizebox{\textwidth}{!}{\begin{tabular}{@{}clccclccclccc@{}}
\toprule
Simplified & \multicolumn{4}{c}{\emph{rpn}}  & \multicolumn{4}{c}{\emph{tllen}} & \multicolumn{4}{c}{\emph{h}}  \\ \midrule
- &
  \multicolumn{1}{c}{$\ $Decrease (\%)$\ $} &
  $\ \downarrow\ $ &
  $\ =\ $ &
  $\ \uparrow\ $ &
  \multicolumn{1}{c}{$\ $Decrease (\%)$\ $} &
  $\ \downarrow\ $ &
  $\ =\ $ &
  $\ \uparrow\ $ &
  \multicolumn{1}{c}{$\ $Decrease (\%)$\ $} &
  $\ \downarrow\ $ &
  $\ =\ $ &
  $\ \uparrow\ $ \\ \midrule
No         & $11.3  \pm 18.4$ & 54 & 54 & 2 & $7.9 \pm 18.5$  & 27  & 81  & 2 & $2.5 \pm 10.7$ & 6 & 104 & 0 \\
Yes        & $2.6 \pm 14.2$   & 9  & 95 & 6 & $2.6 \pm 15.6$  & 9   & 97  & 4 & $2.5 \pm 10.7$ & 6 & 104 & 0 \\ \bottomrule
\end{tabular}}
\end{table}
We present a comparison of the metrics of the $\omega$-regular expressions (without simplification) synthesised from state-based and transition-based NBAs for the LTL formulas in the patterns dataset in Fig.~\ref{fig:scatter_metrics_nosimp_patterns}. We present the same plots for the two datasets with simplification in Fig.~\ref{fig:scatter_simp_metrics}.
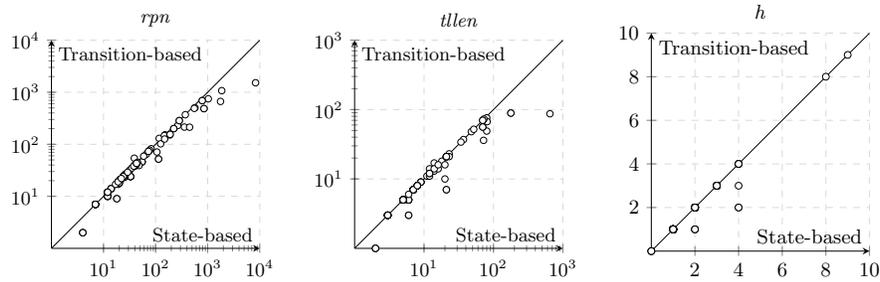
\begin{figure}[!ht]
    \centering
    \captionsetup[subfigure]{labelformat=empty}
    \resizebox{\textwidth}{!}{%
    \begin{tabular}{ccc}
    \resizebox{0.33\textwidth}{!}{ %
        \subfloat[]{ %
            \begin{tikzpicture}
                \begin{axis}[
                    axis x line=middle,
                    axis y line=middle,
                    xmin=1, xmax=1e4,
                    xmode=log,
                    ymode=log,
                    width=5cm, height=5cm,     
                    grid = major,
                    grid style={dashed, gray!30},
                    title={\emph{rpn}},
                    title style={yshift=-1mm},
                    ylabel style={align=flush left},
                    ylabel={Transition-based},
                    xlabel=State-based,
                    legend style={at={(0.5,-0.05)}, anchor=north,
                    legend columns = 2, /tikz/every even column/.append style={column sep=0.25cm}},
                    xtick = {1e1, 1e2, 1e3, 1e4},
                    ytick = {1e1, 1e2, 1e3, 1e4},
                 ] %
                \addplot[only marks,fill opacity=0.75, fill=white, mark size=1.5pt] table [x=nosimplify size, y=transition size, col sep=comma] {data/patterns_data.csv};
                \addplot[post/.style={mark=none},domain=1:1e4, samples=100, black]{x};
                \end{axis}
            \end{tikzpicture}
        } %
     } & %
     \resizebox{0.33\textwidth}{!}{ %
            \subfloat[]{ %
            \begin{tikzpicture}
                \begin{axis}[
                    axis x line=middle,
                    axis y line=middle,
                    xmin=1, xmax=1e3,
                    xmode=log,
                    ymode=log,
                    width=5cm, height=5cm,     
                    grid = major,
                    grid style={dashed, gray!30},
                    title={\emph{tllen}},
                    title style={yshift=-1mm},
                    ylabel style={align=flush left},
                    ylabel={Transition-based},
                    xlabel=State-based,
                    legend style={at={(0.5,-0.05)}, anchor=north,
                    legend columns = 2, /tikz/every even column/.append style={column sep=0.25cm}},
                    xtick = {1e1, 1e2, 1e3},
                    ytick = {1e1, 1e2, 1e3},
                 ] %
                \addplot[only marks,fill opacity=0.75, fill=white, mark size=1.5pt] table [x=nosimplify length
                , y=transition length
                , col sep=comma] {data/patterns_data.csv};
                \addplot[post/.style={mark=none},domain=1:1e3, samples=100, black]{x};
                \end{axis}
            \end{tikzpicture}
            \label{subfig:scatter_unsimp_patterns}
            } %
     } & %
     \resizebox{0.33\textwidth}{!}{ %
            \subfloat[]{ %
            \begin{tikzpicture}
                \begin{axis}[
                    axis x line=middle,
                    axis y line=middle,
                    xmin=0,
                    width=5cm, height=5cm,     
                    grid = major,
                    grid style={dashed, gray!30},
                    title={\emph{h}},
                    title style={yshift=-1mm},
                    ylabel style={align=flush left},
                    ylabel={Transition-based},
                    xlabel=State-based,
                    legend style={at={(0.5,-0.05)}, anchor=north,
                    legend columns = 2, /tikz/every even column/.append style={column sep=0.25cm}},
                 ] %
                \addplot[only marks,fill opacity=0.75, fill=white, mark size=1.5pt] table [x=nosimplify starheight, y=transition starheight, col sep=comma] {data/patterns_data.csv}; %
                \addplot[post/.style={mark=none},domain=0:10, samples=100, black]{x};
                \end{axis}
            \end{tikzpicture}
            }
     }
     \end{tabular}
     }
     \caption{We observe largely the same behaviour for the formulas in the patterns dataset as in the timelines dataset. Each mark represents an LTL formula from the pattern dataset without simplification.}
    \label{fig:scatter_metrics_nosimp_patterns}
\end{figure} 

\begin{figure}[!ht]
    \centering
    \captionsetup[subfigure]{labelformat=empty}
    \resizebox{\textwidth}{!}{%
    \begin{tabular}{ccc}
    \resizebox{0.33\textwidth}{!}{%
        \subfloat[]{ %
        \begin{tikzpicture}
            \begin{axis}[
                axis x line=middle,
                axis y line=middle,
                xmin=1, xmax=1e7,
                xmode=log,
                ymode=log,
                width=5cm, height=5cm,     
                grid = major,
                grid style={dashed, gray!30},
                title={\emph{rpn}},
                title style={yshift=-1mm},
                ylabel style={align=flush left},
                ylabel={Transition-based},
                xlabel=State-based,
                xtick = {1e1, 1e4, 1e7},
                ytick= {1e1, 1e4, 1e7},
                legend style={at={(0.5,-0.05)}, anchor=north,
                legend columns = 2, /tikz/every even column/.append style={column sep=0.25cm}},
             ] %
            \addplot[only marks,fill opacity=0.75, fill=white, mark size=1.5pt] table [x=simplify size, y=simptrans size, col sep=comma] {data/timelines_data.csv};
                \addplot[post/.style={mark=none},domain=1:1e7, samples=100, black]{x};
            \end{axis}
        \end{tikzpicture}
        } %
    }%
    & \resizebox{0.33\textwidth}{!}{ %
     \subfloat[(a) Timelines]{ %
        \begin{tikzpicture}
            \begin{axis}[
                axis x line=middle,
                axis y line=middle,
                xmin=1,
                xmode=log,
                ymode=log,
                width=5cm, height=5cm,     
                grid = major,
                grid style={dashed, gray!30},
                title={\emph{tllen}},
                title style={yshift=-1mm},
                ylabel style={align=flush left},
                ylabel={Transition-based},
                xlabel=State-based,
                legend style={at={(0.5,-0.05)}, anchor=north,
                legend columns = 2, /tikz/every even column/.append style={column sep=0.25cm}},
                xtick={1e2, 1e4, 1e6},
                ytick={1e2, 1e4, 1e6},
             ] %
            \addplot[only marks,fill opacity=0.75, fill=white, mark size=1.5pt] table [x=simplify length
            , y=simptrans length
            , col sep=comma] {data/timelines_data.csv};
                \addplot[post/.style={mark=none},domain=1:1e6, samples=100, black]{x};
                    \end{axis}
        \end{tikzpicture}
        \label{subfig:scatter_simp_timelines}
        } %
    } %
    &   \resizebox{0.33\textwidth}{!}{ %
    \subfloat[]{ %
        \begin{tikzpicture}
            \begin{axis}[
                axis x line=middle,
                axis y line=middle,
                xmin=0, xmax=10,
                width=5cm, height=5cm,     
                grid = major,
                grid style={dashed, gray!30},
                title={\emph{h}},
                title style={yshift=-1mm},
                ylabel style={align=flush left},
                ylabel={Transition-based},
                xlabel=State-based,
                legend style={at={(0.5,-0.05)}, anchor=north,
                legend columns = 2, /tikz/every even column/.append style={column sep=0.25cm}},
             ] %
            \addplot[only marks,fill opacity=0.75, fill=white, mark size=1.5pt] table [x=simplify starheight
            , y=simptrans starheight, col sep=comma] {data/timelines_data.csv};
                \addplot[post/.style={mark=none},domain=0:10, samples=100, black]{x};
                    \end{axis}
        \end{tikzpicture}
        } %
        } 
     \\ %
     \resizebox{0.33\textwidth}{!}{ %
            \subfloat[]{ %
            \begin{tikzpicture}
                \begin{axis}[
                    axis x line=middle,
                    axis y line=middle,
                    xmin=1, xmax=1e4,
                    xmode=log,
                    ymode=log,
                    width=5cm, height=5cm,     
                    grid = major,
                    grid style={dashed, gray!30},
                    title={\emph{rpn}},
                    title style={yshift=-1mm},
                    ylabel style={align=flush left},
                    ylabel={Transition-based},
                    xlabel=State-based,
                    legend style={at={(0.5,-0.05)}, anchor=north,
                    legend columns = 2, /tikz/every even column/.append style={column sep=0.25cm}},
                    xtick = {1e1, 1e2, 1e3, 1e4},
                    ytick = {1e1, 1e2, 1e3, 1e4},
                 ] %
                \addplot[only marks,fill opacity=0.75, fill=white, mark size=1.5pt] table [x=simplify size, y=simptrans size, col sep=comma] {data/patterns_data.csv};
                \addplot[post/.style={mark=none},domain=1:1e4, samples=100, black]{x};
                \end{axis}
            \end{tikzpicture}
            } %
     } & %
     \resizebox{0.33\textwidth}{!}{ %
            \subfloat[(b) Patterns]{ %
            \begin{tikzpicture}
                \begin{axis}[
                    axis x line=middle,
                    axis y line=middle,
                    xmin=1, xmax=1e3,
                    xmode=log,
                    ymode=log,
                    width=5cm, height=5cm,     
                    grid = major,
                    grid style={dashed, gray!30},
                    title={\emph{tllen}},
                    title style={yshift=-1mm},
                    ylabel style={align=flush left},
                    ylabel={Transition-based},
                    xlabel=State-based,
                    legend style={at={(0.5,-0.05)}, anchor=north,
                    legend columns = 2, /tikz/every even column/.append style={column sep=0.25cm}},
                    xtick = {1e1, 1e2, 1e3},
                    ytick = {1e1, 1e2, 1e3},
                 ] %
                \addplot[only marks,fill opacity=0.75, fill=white, mark size=1.5pt] table [x=simplify length
                , y=simptrans length
                , col sep=comma] {data/patterns_data.csv};
                \addplot[post/.style={mark=none},domain=1:1e3, samples=100, black]{x};
                \end{axis}
            \end{tikzpicture}
            \label{subfig:scatter_simp_patterns}
            } %
     } & %
     \resizebox{0.33\textwidth}{!}{ %
            \subfloat[]{ %
            \begin{tikzpicture}
                \begin{axis}[
                    axis x line=middle,
                    axis y line=middle,
                    xmin=0,
                    width=5cm, height=5cm,     
                    grid = major,
                    grid style={dashed, gray!30},
                    title={\emph{h}},
                    title style={yshift=-1mm},
                    ylabel style={align=flush left},
                    ylabel={Transition-based},
                    xlabel=State-based,
                    legend style={at={(0.5,-0.05)}, anchor=north,
                    legend columns = 2, /tikz/every even column/.append style={column sep=0.25cm}},
                 ] %
                \addplot[only marks,fill opacity=0.75, fill=white, mark size=1.5pt] table [x=simplify starheight, y=simptrans starheight, col sep=comma] {data/patterns_data.csv}; %
                \addplot[post/.style={mark=none},domain=0:10, samples=100, black]{x};
                \end{axis}
            \end{tikzpicture}
            }
     } 
    \end{tabular}%
    }
    \caption{It remains beneficial to use transition-based NBAs instead of state-based NBAs to synthesise $\omega$-regular expressions with simplification. While the average difference between the two synthesised expressions is less than the case without simplification, some LTL formulas still produce smaller $\omega$-regular expressions when synthesising from their transition-based NBA. We also note that simplification breaks the observed ``upper bound" behaviour of the metrics from the state-based NBA observed in the experiments without simplification.}
    \label{fig:scatter_simp_metrics}
\end{figure}
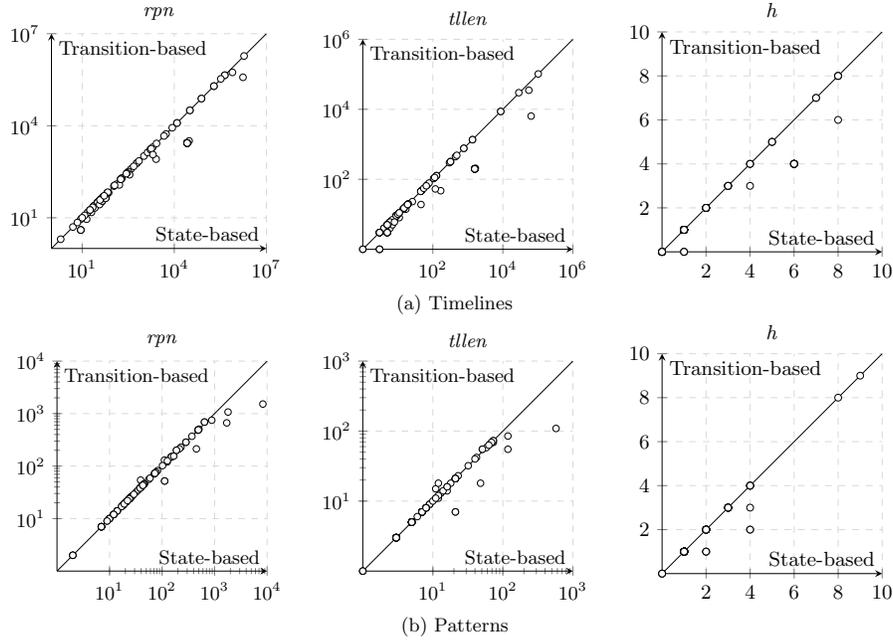

We also provide the results of additional experiments for \textbf{RQ2} by displaying the box plots for the patterns dataset in Fig.~\ref{fig:patterns-boxplots}.
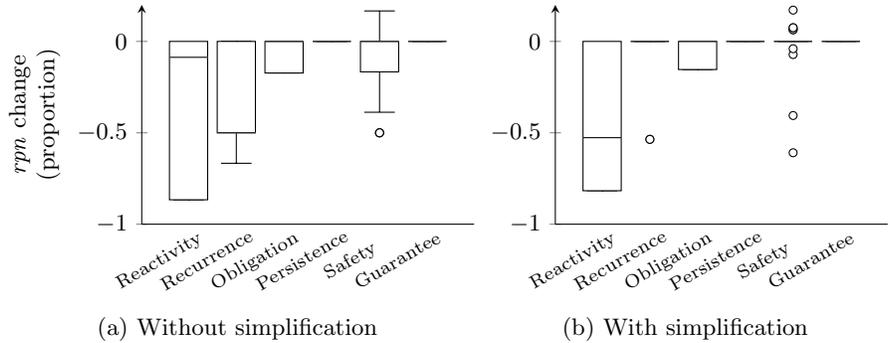
\begin{figure}[!ht]
    \centering
    \subfloat[Without simplification]{\begin{tikzpicture}
	\pgfplotstableread[col sep=comma]{data/patterns_mphierarchy.csv}\csvdata
    	\begin{axis}[
    		boxplot/draw direction = y,
    		axis x line* = bottom,
    		axis y line = left,
    		xtick = {1, 2, 3, 4, 5, 6},
            xticklabel style={
                /pgf/number format/fixed,
                /pgf/number format/precision=1,
                /pgf/number format/fixed zerofill,
                anchor=north,
                align=right,
                font=\scriptsize,
                yshift=-0.3cm,
                xshift=-0.5cm,
                rotate=30
            },
    		xticklabels = {Reactivity, Recurrence, Obligation, Persistence, Safety, Guarantee},
    		xtick style = {draw=none}, 
            ylabel style = {align=center},
    		ylabel = {\emph{rpn} change\\(proportion)},
            ymin=-1, ymax=0.2,
            width=6cm, height=4.5cm,
    	]
    		\foreach \n in {0,...,5} {
    			\addplot [-, boxplot, draw=black,mark=*, fill=white, mark size=1.5pt] table[y index=\n] {\csvdata};
    		}
    	\end{axis}
    \end{tikzpicture}\label{subfig:patterns_unsimpboxplot}} %
    \subfloat[With simplification]{    \begin{tikzpicture}
    	\pgfplotstableread[col sep=comma]{data/simppatterns_mphierarchy.csv}\csvdata
    	\begin{axis}[
    		boxplot/draw direction = y,
    		axis x line* = bottom,
    		axis y line = left,
    		xtick = {1, 2, 3, 4, 5, 6},
            xticklabel style={
                /pgf/number format/fixed,
                /pgf/number format/precision=1,
                /pgf/number format/fixed zerofill,
                anchor=north,
                align=right,
                font=\scriptsize,
                yshift=-0.3cm,
                xshift=-0.5cm,
                rotate=30
            },
    		xticklabels = {Reactivity, Recurrence, Obligation, Persistence, Safety, Guarantee},
    		xtick style = {draw=none}, 
            ymin=-1, ymax=0.2,
            width=6cm, height=4.5cm,
    	]
    		\foreach \n in {0,...,5} {
    			\addplot [-, boxplot, draw=black,mark=*, fill=white, mark size=1.5pt] table[y index=\n] {\csvdata};
    		}
    	\end{axis}
    \end{tikzpicture}
    \label{subfig:patterns_simpboxplot}
    }
    \caption{We observe a similar trend as the box plots in Fig.~\ref{fig:timelines-boxplots}. These plots present the proportional reductions in \emph{rpn} for the formulas in the patterns dataset grouped by their type.}
    \label{fig:patterns-boxplots}
\end{figure}

\clearpage

%
%
\bibliographystyle{splncs04}
\bibliography{refs}

\end{document}